\DeclareMathOperator{\diag}{diag}
\DeclareMathOperator{\tr}{tr}
\DeclareMathOperator{\id}{id}
\newcommand{\GHZ}{{\rm{GHZ}}}
\newcommand{\bra}[1]{\mathinner{\langle #1|}}
\newcommand{\ket}[1]{\mathinner{|#1\rangle}}
\newcommand{\dyad}[1]{| #1\rangle \langle #1|}
\newcommand{\one}[0]{\mathds{1}}
\newcommand{\C}{\mathds{C}}
\newcommand{\WW}{\mathcal{W}}
\newcommand{\vast}{\bBigg@{4}}
\newcommand{\Vast}{\bBigg@{5}}
\newtheorem{theorem}    {Theorem}
\newtheorem{proposition}[theorem]{Proposition}
\newtheorem{observation}[theorem]{Observation}
\begin{document}

\title{
State -- witness contraction
}
\author{Albert Rico${}^{\orcidlink{0000-0001-8211-499X}}$}
\affiliation{
Physics department, Universitat Autònoma de Barcelona,
ES-08193 Bellaterra (Barcelona), Spain;}
\affiliation{
Faculty of Physics, Astronomy and Applied Computer Science, Institute of Theoretical Physics, Jagiellonian University,
30-348 Krak\'{o}w, 
Poland}
\date{\today}
\begin{abstract}

We construct both nonlinear and linear entanglement witnesses, by tensoring and partial tracing existing states and witnesses. We show that little shared quantum resources allow to employ decomposable witnesses to obtain larger ones detecting locally undistillable states, and find analytic witnesses for multipartite entangled states that are undistillable across all bipartitions. Crucially, we show how to efficiently optimize multicopy witnesses to desired states, which is a current challenge. As an example of both theoretical and experimental interest, existing trace polynomial witnesses are significantly 
improved while preserving their symmetries and implementability with randomized measurements. Besides detecting entanglement, we also find new witnesses detecting 
$k$-copy distillability. A recipe for the single-copy case is shown to be effective for generic and Werner states.
\end{abstract}

\maketitle

{\em Introduction.}\,\,
In the last decades, a large improvement has been done in manipulating mutlipartite and high-dimensional quantum systems~\cite{advancesHDent_Erhard2020,EfficientLargeScaleMBdyn_Artaco2024}. This paves a road towards realistic quantum software~\cite{MPEntSuperQubits_Lu2022}, quantum networks~\cite{LargScaleQNetworksKozlowski_2019,NetworkGMPE_Navascues2020} and quantum devices where entanglement constitutes a major nonclassical resource. As a result, detecting the presence of entanglement in multipartite quantum systems becomes crucial~\cite{MPEntSuperQubits_Lu2022,DetMPentManyBody_Frerot2022}, but it is at the same time a theoretical and experimental challenge due to the growth of the state space with the local dimension~\cite{GURVITS2003sepNPhard,DetMBodyContinuous_Kunkel2022}.

For this purpose, a vastly used technique is the evaluation of entanglement witnesses, namely observables whose negative expectation value detects the presence of entanglement~\cite{EntDetRev_Guhne2009,ReviewQEnt_Horo2009,bae2020mirrored}. However, important challenges remain: for some states, finding a witness detecting them requires unaccessible computational resources~\cite{doherty2004complete,Navascues_PPT_DPS_2009}; finding new entanglement witnesses for multipartite systems of interest is challenging~\cite{Huber2022DimFree,Dagmar_DetectHypergraph2017,Guhne_DetectGraph}; and adjusting a witness to a state cannot be done systematically in general~\cite{GURVITS2003sepNPhard}. A promising tool detecting larger sets of states~\cite{Liu_FundamLimDet2022} are nonlinear witnesses acting on multiple copies~\cite{HoroMulticopyWit2003,RemikUniversal2008}. While these overcome some limitations of linear witnesses~\cite{Liu_FundamLimDet2022,TP_Rico24}, problems like tailoring multicopy witnesses~\cite{HoroMulticopyWit2003,RemikUniversal2008,TP_Rico24} to states are even more challenging than in the linear case.

Here we provide a method to derive 
both linear and nonlinear entanglement witnesses using existing states and witnesses (Prop.~\ref{prop:StWitContr} and Fig.~\ref{fig:ContrWit}). 
We show that little shared entanglement suffices to lift existing witnesses to larger ones with further detection capabilities (Observation~\ref{obs:4PartWitDetBoundEnt}), and we construct simple witnesses detecting bound multipartite entanglement (Proposition~\ref{prop:MPPTwits}). 
By adjusting the constructed witnesses 
-- see Eq.~\eqref{eq:OptimWitGen}, we 
are able to optimize to target states a class witnesses acting on multiple copies~\cite{HoroMulticopyWit2003,RemikUniversal2008} (Fig.~\ref{fig:NonlinOpt} and Table~\ref{tab:DetectIso}). Besides detecting entanglement, the method is shown to be effective at detecting $k$-copy distillability -- see Eq.~\eqref{eq:Distillability}. 
\\

{\em State-witness contraction.}\,\,
The following result provides a systematic recipe to construct new entanglement witnesses from existing states and witnesses (depicted in Fig.~\ref{fig:ContrWit}). 
\begin{proposition}\label{prop:StWitContr}
    Let $\{R^{(j)}\}_{j=1}^n$ be states or witnesses in $k_j$-partite systems $S^j$. Let $\tau_{S^1_{k_1}...S^n_{k_n}}$ be a $n$-partite state or witness shared between the last parties of all systems $S^j$. Then, the operator
\begin{align}
  \WW_\tau := \tr_{S^1_{k_1}...S^n_{k_n}}\Big ( &R^{(1)}_{S^1_{1...k_1}}\otimes...\otimes R^{(n)}_{S^n_{1...k_n}}\,\cdot\label{eq:WitGenkn}\\
  &\one_{S^1_{1...k_1-1},...,S^n_{1...k_n-1}}\otimes\tau_{S^1_{k_1},...,S^n_{k_n}} \Big )\nonumber
 \end{align}
 has nonnegative expectation value on separable $\kappa$-partite states $\varrho_{S^1_{1...k_1-1},...,S^n_{1...k_n-1}}$, $\tr(\WW_\tau\varrho)\geq 0$, where $\kappa=k_1+...+k_n-n$.
\end{proposition}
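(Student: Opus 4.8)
The plan is to test $\WW_\tau$ directly against fully separable input states and to exploit the fact that contracting any witness or state with a product state on part of its support produces a positive semidefinite operator on the remaining parties. First I would invoke linearity of the trace together with convexity of the separable set to reduce the claim to product states, writing $\varrho = \varrho^{(1)}\ot\dots\ot\varrho^{(n)}$, where each $\varrho^{(j)}$ is a product state on the first $k_j-1$ parties $S^j_{1\dots k_j-1}$ of block $j$. Since the mixing weights of a separable $\varrho$ are nonnegative and the trace is linear, it suffices to establish $\tr(\WW_\tau\varrho)\ge 0$ for such product states.

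The second step is a bookkeeping computation. Pushing $\varrho$ inside the partial trace of Eq.~\eqref{eq:WitGenkn}, and using that both $R^{(1)}\ot\dots\ot R^{(n)}$ and $\varrho\ot\one$ factorize across the blocks while $\tau$ couples only the last parties, I would collapse the expectation value to
\begin{align}
  \tr(\WW_\tau\varrho) = \tr\Big[\tau\,\big(\rho^{(1)}_{\rm out}\ot\dots\ot\rho^{(n)}_{\rm out}\big)\Big],
\end{align}
where each contracted operator
\begin{align}
  \rho^{(j)}_{\rm out} := \tr_{S^j_{1\dots k_j-1}}\big[(\varrho^{(j)}\ot\one_{S^j_{k_j}})\,R^{(j)}\big]
\end{align}
acts only on the last party $S^j_{k_j}$ on which $\tau$ is supported. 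The only care needed here is to track which subsystems are traced over and to check that cross terms between distinct blocks decouple.

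The crucial step is then to show $\rho^{(j)}_{\rm out}\ge 0$ for every $j$, even when $R^{(j)}$ is merely a witness. I would probe it with an arbitrary pure state $\ket{\phi}$ on $S^j_{k_j}$ and observe that $\bra{\phi}\rho^{(j)}_{\rm out}\ket{\phi} = \tr\big[R^{(j)}\,(\varrho^{(j)}\ot\dyad{\phi})\big]$, in which $\varrho^{(j)}\ot\dyad{\phi}$ is a fully separable (indeed product) state on all $k_j$ parties of $S^j$. As $R^{(j)}$ is a state or a witness, this trace is nonnegative, so $\rho^{(j)}_{\rm out}\ge 0$. This is the heart of the argument: contracting the remaining parties of each block against a product state turns even an entanglement witness into a genuine positive operator.

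Finally, since each $\rho^{(j)}_{\rm out}$ is positive semidefinite, the product $\rho^{(1)}_{\rm out}\ot\dots\ot\rho^{(n)}_{\rm out}$ is an (unnormalized) fully separable $n$-partite operator on $S^1_{k_1}\dots S^n_{k_n}$. Testing $\tau$ against it yields $\tr(\WW_\tau\varrho)\ge 0$ in both admissible cases: if $\tau$ is a witness this is precisely its defining nonnegativity on separable states, and if $\tau$ is a state it is the nonnegativity of a trace of a product of positive operators. I expect the main obstacle to be the index bookkeeping in the second step, while the positivity arguments are clean once the expectation value is brought to the contracted form above.
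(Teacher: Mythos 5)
Your proof is correct and takes essentially the same route as the paper's: your contracted operators $\rho^{(j)}_{\rm out}$ are exactly the paper's $\theta^{(j)}$, whose positive semidefiniteness is established the same way (probing the last party against arbitrary states, then invoking nonnegativity of $R^{(j)}$ on product inputs), followed by the same pairing of the resulting product of positive operators with the state or witness $\tau$ and the same convexity reduction. No gaps.
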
 
This can be seen from the duality between the block-positive cone and the separable cone~\cite{Karol-GeoQstates2006} as follows. 
\begin{proof}
Using the coordinate-free definition of the partial trace, for any state $\sigma^{(j)}_k$ and product state $\varrho^{(j)}=\varrho_1^{(j)}\otimes...\otimes\varrho_{k-1}^{(j)}$ we have $0\leq \tr(R^{(j)}\cdot\varrho^{(j)}\otimes\sigma^{(j)}_k)
=\tr (\theta^{(j)}\otimes\sigma^{(j)}_k )$, where we define  $\theta^{(j)}:=\tr_{1...k-1}(R^{(j)}\cdot\varrho_1^{(j)}\otimes...\otimes\varrho_{k-1}^{(j)}\otimes\one)$. Equivalently, $\theta^{(j)}$ is positive semidefinite for all $1\leq j\leq n$. Therefore, for any  $n$-partite state or witness $\tau$, we have the inequality $0\leq \tr(\theta^{(1)}\otimes...\otimes\theta^{(n)}\cdot\tau)=\tr(\WW_\tau\varrho)$. Here the equality is obtained by plugging in the definition~\eqref{eq:WitGenkn} of $\WW_\tau$, and we assume $\varrho:=\varrho^{(1)}\otimes...\otimes\varrho^{(n)}$ where each $\varrho^{(j)}$ is a product $(k_j-1)$-partite state. By convexity, $\tr(\WW_\tau\varrho)\geq 0$ also holds for any separable $(k_1+...+k_n-n)$-partite state $\varrho$.

\end{proof}
By the Choi-Jamio{\l}kowski isomorphism, each state or witness opertor $R^{(j)}$ can be seen as the Choi operator of a map $\Psi_j(\varrho^T):=\tr_{1...k_j-1}(R^{(j)}\cdot\varrho_{1...k_j-1}\otimes\one_{k_j})$ from systems $1...{k_j-1}$ to the system ${k_j}$, whose output is positive semidefinite if the inputs are products of positive semidefinite operators. 
Further discussion focused on the case of trace polynomials is given in~\cite{long}.

It is worth mentioning that this technique is not limited to using full-separability witnesses, in which case $\WW_\tau$ is a full-separability witness as well. As an example beyond, we shall consider a bipartite state or witness $\tau_{A_4B_4}$ and two entanglement witnesses $W_{A_1A_2A_3A_4}$ and $V_{B_1B_2B_3B_4}$ detecting non-$2|1|1$-separability (i.e. have nonnegative expectation value on states where two parties are entangled and the rest are separable). In particular, let $W$ and $V$ have a nonnegative expectation value on four partite states of the form $\varrho_{1234}=\varrho_{123}\otimes\varrho_4$ and convex combinations thereof, where $\varrho_{123}=p_1\varrho_1\otimes\varrho_{23}+p_2\varrho_2\otimes\varrho_{13}+p_3\varrho_{12}\otimes\varrho_3$ with $p_1,\,p_2,\,p_3\geq 0$ and $p_1+p_2+p_3=1$ (namely $\varrho_{123}$ is biseparable). This gives rise to the $(\kappa=6)$-partite entanglement witness
\begin{align}
    \WW_\tau = \tr_{A_4B_4}\Big (&W_{A_1A_2A_3A_4}\otimes V_{B_1B_2B_3B_4}\cdot\\
    &\quad\quad\quad\one_{A_1A_2A_3B_1B_2B_3}\otimes\tau_{A_4B_4}\Big )\nonumber\,.
\end{align}
This witness has a nonnegative expectation value on states of the form $\varrho_{A_1A_2A_3B_1B_2B_3}=\sum_iq_i\varrho^{(i)}_{A_1A_2A_3}\otimes\varrho^{(i)}_{B_1B_2B_3}$ where $q_i\geq 0$, $\sum_iq_i=1$, and $\varrho^{(i)}_{A_1A_2A_3}$ and $\varrho^{(i)}_{B_1B_2B_3}$ are biseparable, and detects states out of this set if $\tau$ is chosen such that $\WW_\tau$ has a negative eigenvalue. 
\\

\begin{figure}[]
    \centering
    \includegraphics[width=0.5\linewidth]{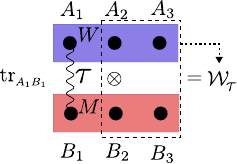}
    \caption{{\bf State-witness contraction yielding a four-partite witness} from three- and bi- partite states and witnesses, with three-partite systems $S^1:=A$ and $S^2:=B$. First we take the tensor product of a witness $W$ and a positive operator $M$. Then the first (or last, equivalently) subsystems $A_1$ and $B_1$ are contracted (multiplied and partial traced) with a state or witness $\tau_{A_1B_1}$ to be chosen -- see Eqs.~\eqref{eq:WitGenkn} and~\eqref{eq:Wit4partNonDeco}. This allows to lift $W$ to a larger witness $\WW_\tau$ with further detection capabilities (Observation~\ref{obs:4PartWitDetBoundEnt} and Proposition~\ref{prop:MPPTwits}), which moreover can be partially optimized over $\tau$ to detect specific four-partite states -- see Eq.~\eqref{eq:OptimWitGen}.}
    \label{fig:ContrWit}
\end{figure}

{\em Locally bound entanglement.} \,\,
As a further case of interest, we will employ this method to reuse existing highly symmetric witnesses given in~\cite{MaassenSlides}, which belong to a larger class of methods involving permutations in the Hilbert space~\cite{Elben2020,Neven_SymmetryResMomentsPT2021}. While these are desired due to their symmetries and implementation, their use in detecting states with local positive partial transpositions is a current challenge~\cite{Huber2021MatrixFO,Huber2022DimFree,MaassenSlides,PmapsWBAlg_Miqueleta2024,ImaiBoundfromRM_2021,liu2022MomentsPermutations,zhang2023experimentalverificationboundmultiparticle,CollectiveRM_Imai2024}. In particular, consider the witness
\begin{equation}\label{eq:Wit2deco}
 W = \frac{P_{2,1}}{4}-P_{1^3}\,
\end{equation}
constructed from the so-called Young projectors 
$P_{2,1}=  \eta_d\big (2(\id)-(123)-(132)\big )/6$ and $P_{1^3}= \eta_d\big ((\id)-(12)-(13)-(23)+(123)+(132)\big )/6$, where $\eta_d(\pi)\ket{v_1}\ket{v_2}\ket{v_3}=\ket{v_{\pi^{-1}(1)}}\ket{v_{\pi^{-1}(2)}}\ket{v_{\pi^{-1}(3)}}\in{\C^d}^{\otimes 3}$ represents the permutation $\pi\in S_3$ onto ${(\C^d)^{\otimes 3}}$ and $(\id)$ is the identity permutation. Here permutations are written in cyclic notation and Young projectors $P_\lambda$ are labeled by partitions $\lambda\vdash n$ (see Appendix~\ref{App:YoungProjectors} for further details). This witness can be derived from immanant inequalities~\cite{marshall1979inequalities,Lieb_Permanent_Conj_1996} using the techniques in~\cite{MaassenSlides,Huber2021MatrixFO,TP_Rico24}. Linear optimization shows that it can detect three-partite states with zero, one or two positive partial transpositions, but not three (see Table II in~\cite{long}). Using Proposition~\ref{prop:StWitContr} with an ancillary Bell state $\ket{\phi^+}=\sum_{i=0}^{2}\ket{ii}/\sqrt{3}$, it can be recycled to detect four-partite states whose local parties have positive partial transpositions, and thus cannot be distilled with local operations with classical communication if the four parties are spatially separated. This is a type of unlockable bound entanglement~\cite{4partUnlockBound_Smolin2001,SepDistMultiPartBoundUnlock_Dur1999,GenSmolinStatesUnlockBound_Augusiak2006,BoundMaxViolBellUnlockApplic_Augusiak2006,QComComplexBoundUnlockApplic_Bruckner2002,RemoteInfoConcUnlockBoundApplic_Murao2001}, which we refer to as {\em locally bound}.  \begin{observation}\label{obs:4PartWitDetBoundEnt}
Let $W$ be the witness of Eq.~\eqref{eq:Wit2deco}, and let $P:=P_{1^3}$ be the three-qutrit antisymmetrizer. Choose  $\tau=\dyad{\phi^+}$. The four-qutrit entanglement witness
 \begin{equation}\label{eq:Wit4partNonDeco}
  \WW_{\tau} = \tr_{A_3B_3}\Big (W_{A_1A_2A_3}\otimes P_{B_1B_2B_3}\cdot\one\otimes\tau_{A_3B_3}\Big )
 \end{equation}
 can detect non-fully-separable four-partite states $\varrho_{A_1A_2B_1B_2}$ with $\varrho^{T_{i}}\geq 0$ for $i\in\{A_1,A_2,B_1,B_2\}$. In particular, $\min\{\tr(\WW_{\tau}):\varrho^{T_{i}}\geq 0\}\approx-0.000276$\,.
\end{observation}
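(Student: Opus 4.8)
The plan is to split the statement into the witness property of $\WW_\tau$, its explicit form as a $4$-qutrit operator, and the semidefinite optimization giving the quoted value; I would treat them in that order.

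First, the witness property is a direct instance of Proposition~\ref{prop:StWitContr} with $n=2$: take $R^{(1)}=W$, the tripartite full-separability witness of Eq.~\eqref{eq:Wit2deco} on $S^1=A_1A_2A_3$; $R^{(2)}=P=P_{1^3}$, the three-qutrit antisymmetrizer on $S^2=B_1B_2B_3$, which for $d=3$ is a rank-one projector and hence a legitimate pure-state input; and $\tau=\dyad{\phi^+}$ shared by the last parties $A_3,B_3$. Proposition~\ref{prop:StWitContr} then gives $\tr(\WW_\tau\varrho)\geq0$ for every fully separable $\varrho_{A_1A_2B_1B_2}$ (with $\kappa=3+3-2=4$), so $\WW_\tau$ is a full-separability witness and any $\varrho$ with $\tr(\WW_\tau\varrho)<0$ is non-fully-separable.

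Second, I would carry out the contraction of Eq.~\eqref{eq:Wit4partNonDeco} explicitly. Tracing the last legs against the maximally entangled state is governed by $\bra{\phi^+}(X_{A_3}\otimes Y_{B_3})\ket{\phi^+}=\tr(XY^T)/d$, equivalently the transpose trick $(\one\otimes M)\ket{\phi^+}=(M^T\otimes\one)\ket{\phi^+}$, which glues the $A_3$ leg of $W$ to the $B_3$ leg of $P$ with a transpose --- i.e.\ it realizes the Choi-map composition noted after Proposition~\ref{prop:StWitContr}. Expanding $W$ and $P$ in the permutation operators $\eta_d(\pi)$ that build $P_{2,1}$ and $P_{1^3}$ reduces the contraction to products and partial traces of representation matrices, which I would evaluate symbolically to obtain the $81\times81$ Hermitian matrix $\WW_\tau$.

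Third, the detection claim reduces to the semidefinite program
\[
\min\ \tr(\WW_\tau\varrho)\quad\text{s.t.}\quad \varrho\geq0,\ \tr\varrho=1,\ \varrho^{T_i}\geq0\ \text{for all}\ i\in\{A_1,A_2,B_1,B_2\},
\]
whose optimum is the quoted $\approx-0.000276$. Because it sits so close to the separability boundary, I would certify the value from both sides: the primal optimizer $\varrho^\ast$ is an explicit locally-PPT state with $\tr(\WW_\tau\varrho^\ast)<0$, which by the witness property is non-fully-separable yet undistillable across every single-party cut (locally bound entanglement), while a dual-feasible point lower-bounds the value and pins it to the stated figure.

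The main obstacle is twofold. Conceptually, the delicate step is the contraction of the second paragraph: the tensor-leg bookkeeping must be exact, since mis-assigning a leg or dropping the transpose alters the spectrum of $\WW_\tau$ entirely. Quantitatively, the optimum is minute in magnitude, so a single primal SDP run does not suffice; one needs a well-conditioned formulation together with an explicit dual certificate to upgrade ``$\WW_\tau$ takes a negative value on some locally-PPT state'' into the precise constant stated.
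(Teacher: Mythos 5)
Your proposal is correct and follows essentially the same route as the paper: the witness property of $\WW_\tau$ is exactly Proposition~\ref{prop:StWitContr} instantiated with $R^{(1)}=W$, $R^{(2)}=P_{1^3}$ (a legitimate input, being a rank-one projector for $d=3$), $\tau=\dyad{\phi^+}$ and $\kappa=4$, while the detection claim and the value $\approx-0.000276$ rest on the semidefinite program minimizing $\tr(\WW_\tau\varrho)$ over locally-PPT four-qutrit states. The paper provides no analytic argument beyond this, so your extra care with the leg/transpose bookkeeping in the contraction and with primal--dual certification of the small optimum is a sound (and welcome) elaboration of the same approach rather than a different one.
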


Note that this can be achieved using the shared quantum resources provided by the Bell state in an ancillary system $A_3B_3$. Yet, these can be reduced significantly: using a less entangled state $\ket{\phi_s}= \sqrt{s}\ket{00}+\sqrt{1-s}\ket{11}$ embedded into $\C^3\otimes\C^3$ with $s=0.9999$, a minimum value $\min_{\varrho^{T_i}\geq 0}\langle\WW_\tau\rangle_\varrho\approx -6.051\cdot 10^{-5}$ is obtained using only $-s\log_2(s) - (1-s)\log_2(1-s)\approx 0.0015$ ebits of entanglement.
\\

{\em Multipartite bound entanglement.}\,\,
Here we consider the detection of multipartite entangled states that cannot be distilled to pure entanglement across any of the bipartitions. This is the case of multipartite entangled states whose partial transpositions across all bipartitions are positive semidefinite (MPPT). Our goal is to lift non-decomposable multipartite witnesses into larger ones, also non-decomposable. Namely, to prove the following:
\begin{proposition}\label{prop:MPPTwits}
 Using SWC, it is possible to use an entanglement witness detecting $k$-partite MPPT states of certain size, to create a new witness detecting $k'>k$-partite MPPT states.
\end{proposition}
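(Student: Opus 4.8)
The plan is to prove the statement by an explicit construction using the state--witness contraction (SWC) of Proposition~\ref{prop:StWitContr}, lifting by one party at a time (iterating the same step then yields any $k'>k$). Suppose $W$ is a $k$-partite non-decomposable witness with $\tr(W\varrho_0)<0$ for some $k$-partite MPPT state $\varrho_0$, i.e. $\varrho_0$ is entangled yet $\varrho_0^{T_I}\geq 0$ for every bipartition $I$ of its $k$ parties. I would place $W$ directly as the factor $R^{(1)}$ in Eq.~\eqref{eq:WitGenkn} (its last party reserved for the contraction, so that its other $k-1$ parties survive into the output), add an ancilla $M=R^{(2)}$ on $k_2\geq 3$ parties playing the role of $P$ in Fig.~\ref{fig:ContrWit}, and contract the two designated last parties with a glue $\tau$. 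By the party count $\kappa=k_1+k_2-n$ of Proposition~\ref{prop:StWitContr} (here $n=2$, $k_1=k$) this produces a $k'=(k-1)+(k_2-1)\geq k+1>k$ partite operator $\WW_\tau$, and Proposition~\ref{prop:StWitContr} already certifies $\tr(\WW_\tau\varrho)\geq 0$ on every fully separable $\varrho$; the entire content of the proposition is therefore to exhibit a $k'$-partite MPPT state that $\WW_\tau$ detects.

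The steps, in order, would be: (i) fix the ingredients, taking $M$ (and, if needed, $\tau$) to be PPT across every cut they introduce, so the ancilla adds no new negativity; (ii) invoke Proposition~\ref{prop:StWitContr} for validity of $\WW_\tau$ on separable states; (iii) build the candidate detected state $\varrho'$ by the \emph{dual} contraction on the state side, gluing $\varrho_0$ to a PPT ancilla through a state $\sigma$ in the same pattern (for a maximally entangled glue this is the usual link product, equivalently the composition of the Choi maps noted after Proposition~\ref{prop:StWitContr}); (iv) evaluate $\tr(\WW_\tau\varrho')$ via the link-product rule for the chosen glue, which for the dual construction of $\varrho'$ equals a positive multiple of $\tr(W\varrho_0)<0$, certifying detection; and (v) verify that $\varrho'$ is PPT across \emph{every} bipartition of the $k'$ parties. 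Once (v) holds, $\WW_\tau$ has a negative expectation on an entangled all-cuts-PPT state, so it cannot be decomposable across any cut -- a decomposable witness is nonnegative on all PPT states -- which is exactly what detecting a $k'$-partite MPPT state requires.

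The main obstacle is step (v): the all-cuts PPT property is strictly stronger than the single-party PPT condition of Observation~\ref{obs:4PartWitDetBoundEnt}, where a maximally entangled -- hence NPT -- glue $\tau=\dyad{\phi^+}$ was admissible precisely because only the cuts $\varrho^{T_i}$ with $i$ a single party were required. For MPPT the balanced cuts must also be controlled, and in particular every cut that isolates the newly added parties becomes a new constraint. The observation that makes step (v) tractable is that for an output cut $I$ the partial transpose $T_I$ acts only on surviving parties, where the complementary factor in the contraction is the identity, so $T_I$ commutes with the partial trace and pushes onto the building blocks: $\varrho'^{T_I}$ is again a contraction of $\varrho_0^{T_{I'}}$ and of the transposed ancilla through the glue. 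Positivity of $\varrho'^{T_I}$ then follows from $\varrho_0^{T_{I'}}\geq 0$ -- guaranteed because $\varrho_0$ is MPPT, which is where the hypothesis that $W$ detects MPPT states is essential -- together with the PPT choice of the ancilla.

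What requires the most care, and is reflected in the phrase ``of certain size,'' is reconciling two competing demands: the ancilla must be PPT across all the new cuts (so that $\varrho'$ stays MPPT), yet the contraction must still leave a strictly negative overlap (so that detection survives). Keeping $\tr(\WW_\tau\varrho')<0$ while forbidding any negative partial transpose of $\varrho'$ across the enlarged set of cuts generically forces the local dimensions to grow with $k'$. I would track this growth and check, at each inductive step, that the composed positive map underlying $\WW_\tau$ stays positive but non-decomposable; the latter is automatic once step (v) is in place, since a decomposable witness is nonnegative on every PPT state and therefore could never detect the MPPT state $\varrho'$.
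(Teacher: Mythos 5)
Your first half --- building the lifted operator by SWC with a multipartite ancilla and a bipartite glue --- matches the paper, which takes $M=\dyad{\GHZ_{l,d}}$ and $\tau=d\dyad{\phi^+}$ (or the SWAP $\eta_d(12)$) to obtain the closed-form witness~\eqref{eq:MPPTwit}. The certification of detection is where the two arguments part ways, and yours has a genuine gap. The paper never constructs a detected state: it seeds the construction with the three-qubit witness $W_I$ of~\cite{kye2015threeIndecWit} and certifies detection \emph{numerically}, by minimizing $\tr(\WW\varrho)$ over the set of four- and five-qubit MPPT states and finding a negative minimum (Appendix~\ref{app:LiftIndecWit}). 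Your steps (iii)--(v) instead try to exhibit the detected state $\varrho'$ analytically, as a dual contraction of the seed state $\varrho_0$ with a PPT ancilla $\alpha$ through a state glue $\sigma$.

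The gap is that your steps (iv) and (v) are not merely in tension; under your own hypotheses they are mutually exclusive. Write $\tr(\WW_\tau\varrho')=\tr(\Theta N)$, where $\Theta$ is the contraction of $W$ with $\varrho_0$ over their surviving parties and $N$ is the ``wire'' operator on the two contracted parties, obtained by contracting $M$ and $\tau$ (witness side) with $\alpha$ and $\sigma$ (state side). The link-product identity invoked in step (iv), $\tr(\WW_\tau\varrho')=c\,\tr(W\varrho_0)$ with $c>0$, is precisely the statement $N=c\,\eta_d(12)$; this is what the $\GHZ$ ancilla with $\phi^+$ glues produces, in which case $\varrho'$ is the broadcast-isometry image of $\varrho_0$ and the state-side ancilla is a pure entangled (hence NPT) state, so step (v) fails. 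Conversely, if as in step (v) you take $\alpha$ PPT and $\sigma$ a state, while $M$ and $\tau$ are states or decomposable witnesses (which covers $\GHZ$, $\phi^+$ and SWAP), then $N$ is necessarily positive semidefinite (indeed PPT): the maps whose Choi operators are $M$, $\tau$ and $\sigma$ are then all decomposable, and tensor products and compositions of decomposable maps send PPT inputs to positive semidefinite outputs. A positive semidefinite $N$ can never equal $c\,\eta_d(12)$, which has negative eigenvalues, so the claimed reduction is impossible. Repairing this would require $M$ or $\tau$ to be a non-decomposable witness across the new cut, satisfying moreover the exact algebraic identity $N\propto\eta_d(12)$ --- i.e.\ you would need to already hold an object of the very kind the proposition is trying to produce, and no such construction is given; ``tracking the growth of the local dimensions'' does not touch this obstruction. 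This is exactly why the paper certifies detection by direct optimization over the MPPT set rather than by a closed-form detected state.
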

The proof shall be shown with  an analytical example. Consider a $k$-partite witness written in the computational basis, $W=\sum_{i_1,j_i,...,i_k,j_k=0}^{d-1}w_{i_1j_1...i_kj_k}\ket{i_1}\bra{j_1}\otimes...\otimes\ket{i_k}\bra{j_k}$; and the unnormalized $l$-partite GHZ state $\ket{\GHZ_{l,d}}=\sum_{t=0}^{d-1}\ket{t}^{\otimes l}$. We shall use SWC with $M=\dyad{\GHZ_{l,d}}$ to obtain a $\kappa$-partite witness $\WW=\tr_{k,k+l}\big (W_{1...k}\otimes M_{k+1...k+l}\cdot \one\otimes \tau_{k,k+l}\big )$, where $\kappa=k+l-2$. By employing $\tau$ to be an unnormalized maximally entangled state $d\dyad{\phi^+}$ or the SWAP operator $\eta_d(12)=d\dyad{\phi^+}^{T_2}$, one obtains respectively the witness
\begin{equation}\label{eq:MPPTwit}
 \WW_{\phi^+}=\sum_{\vec{i},\vec{j}\in\mathbf{Z}_d^{k}}w_{\vec{i},\vec{j}}\ket{i_1}\bra{j_1}\otimes...\otimes\ket{i_{k}}\bra{j_{k}}^{\otimes l-1}
\end{equation}
with $\vec{i}=i_1,...,i_{k}$ and $\vec{j}=j_1,...,j_{k}$,
or its partial transposition $\WW_{(12)}=\WW_{\phi^+}^{T_{k,...,k+l-2}}$. 
If the $k$-partite witness used is the three-qubit witness $W_I$ dectecting locally PPT states given in~\cite{kye2015threeIndecWit}, then we can find a new entanglement witness $\WW_{\phi^+}$ detecting MPPT states shared among four and five qubits. This is detailed in Appendix~\ref{app:LiftIndecWit}, together with the obtention of less sparse witnesses for MPPT states through different choices of the operator $M$.
\\

{\em Witnesses tailored to states.}\,\, 
While in Proposition~\ref{prop:StWitContr} one can choose $\tau$ to be any block-positive operator (state or witness), the optimization becomes accessible with semidefinite programming if $\tau$ is restricted to be a state or a decomposable witness~\cite{anjos2011handbookConicOpt} (or indecomposable, within the first levels of symmetric extension~\cite{doherty2004complete,navascues2008convergent}). Consider a fixed set of $n$ $k_j$-partite witnesses (or states) $R^{(i)}$ and a target $(\kappa=k_1+...+k_n-n)$-partite state $\varrho$ in local dimension $d$ to be detected. Let us choose $\tau$ to be optimized over all possible states and decomposable witnesses~\cite{EntDetRev_Guhne2009},
\begin{align}
    \tau^*=\text{argmin}_\tau\quad &\tr\bigg (\bigotimes_{i=1}^{n}R^{(i)}\cdot \varrho\otimes\tau\bigg )\label{eq:OptimWitGen}\\
    \text{so that}\quad &\tau = X + \sum_{\mathcal{S}\subset\{k_1,...,k_n\}}X_{\mathcal{S}}\nonumber\\
    &X,X_{\mathcal{S}}^{T_{\mathcal{S}}}\geq 0\nonumber\\
    &\tr(\tau)=1.\nonumber
\end{align}
Here $\mathcal{S}$ denotes any possible subset of the last subsystems $\{k_1,...,k_n\}$ and labels the operators $X_{\mathcal{S}}$ acting on $\C^{d_{k_1}}\otimes\cdots\otimes\C^{d_{k_n}}$, and thus the second line imposes that $\tau$ is a decomposable $n$-partite entanglement witness (or a state, if optimality is achieved for $X_{\mathcal S}=0$). Notice that the variable $\tau$ is of size $d^n$~\cite{parrilo2003SDP}, even though the target state to be detected is described by a density matrix of size $d^{(k-1)n}$ (assuming $k_1=...=k_n:=k$ for simplicity).

As a simple example, using that $\tr(\phi^+\cdot X\otimes Y)=\tr(X^T Y)$ and $\tr\big (\eta_d(12)\cdot X\otimes Y\big )=\tr(XY)$ where $\phi^+=\dyad{\phi^+}$ and $\eta_d(12)$ is the SWAP operator, a witness given by
\begin{equation}\label{eq:WitnesSDPforNPT}
    \WW_{\tau^*} = \tr_{A_2B_2}\Big ( \eta_d(12)_{A_1A_2}\otimes\phi^+_{B_1B_2}\cdot\varrho_{A_1B_1}\otimes\tau^*_{A_2B_2} \Big )
\end{equation}
with $\tau^*=\text{argmin}_\tau\big \{\tr (\WW_\tau\varrho ):\tau\geq 0,\tr\tau=1\big \}$ detects any state $\varrho$ with nonpositive partial transpose. Similarly, standard witness tailoring techniques using diagonalization~\cite{EntDetRev_Guhne2009} are recovered with this example: the duality between positive maps and witnesses~\cite{GeoQuantStates2006} is used to detect any state detected by a positive map $\Lambda_A\otimes\id_B(\varrho_{AB})\not\geq 0$, by optimizing over $\tau^*$ and replacing $\phi^+$ in Eq.~\eqref{eq:WitnesSDPforNPT} for $\Lambda^\dag\otimes\id(\eta_d(12))$ where $\Lambda^\dag$ satisfies $\tr(\Lambda(X)Y)=\tr(X\Lambda^\dag(Y))$ for all operators $X$ and $Y$.
\\

{\em Multiple copies.\,\,}
Detecting entanglement in copies of a state can be very advantageous with respect to linear methods~\cite{RemikUniversal2008,HoroMulticopyWit2003,Liu_FundamLimDet2022}. However, nonlinearity makes it hard to tailor such techniques to desired states. Here we show how 
Eq.~\eqref{eq:OptimWitGen} allows to partially solve this problem with simple semidefinite program whose size is only the size of the state. 
As a case study, we shall 
significantly improve our previous trace-polynomial multicopy constructions given in~\cite{TP_Rico24}: notice that the latter are obtained 
by setting in Proposition~\ref{prop:StWitContr} the input state to be composed of $k-1$ copies of a $\kappa$-partite target state $\varrho^{\otimes k-1}$ with $\kappa=2(k-1)$ and the naive choice $\tau=\varrho$. For example, consider a Bell state affected by white noise,
\begin{equation}\label{eq:Isotropic}
   \varrho = p\dyad{\phi^+} + (1-p)\frac{\one}{d^2}\,,
\end{equation}
which is often used as a worse-case scenario in noisy setups~\cite{HoroRedCrit_1999}. 
In~\cite{TP_Rico24} we showed that this state can be detected for a range of values of $p$ with the nonlinear witness $\WW=(\one-6P_{1^3})_{A_{123}}\otimes (P_{1^3})_{B_{123}}$ acting on three copies of $\varrho$, through $0<\tr(\WW\varrho^{\otimes 3})=\tr(\WW_\varrho \varrho^{\otimes 2})$ where we denote $\WW_\varrho:=\tr_{A_3B_3}(\WW\cdot\one\otimes\varrho_{A_3B_3})$ for easier comparison to our present work. From Proposition~\ref{prop:StWitContr} it becomes clear that $\WW$ can be better tuned to $\varrho$ using the freedom $\tau\neq\varrho$ given by Eq.~\eqref{eq:OptimWitGen}, in this case leading to (Fig.~\ref{fig:ContrWit})
\begin{equation}\label{eq:WitP(1-P)exampleOpt}
    \WW_\tau := \tr_{A_3B_3}\Big (\WW\cdot(\one_{A_1A_2B_1B_2}\otimes\phi^+_{A_3B_3})\Big )\,.
\end{equation}
In Table~\ref{tab:DetectIso} we compare the smallest values of $p$ so that states of Eq.~\eqref{eq:Isotropic} are detected with $\WW_\varrho$ (obtained in~\cite{TP_Rico24}) and the optimized witness $\WW_{\phi^+}$ in~\eqref{eq:WitP(1-P)exampleOpt}.

\begin{figure}
    \centering
    \includegraphics[width=0.7\linewidth]{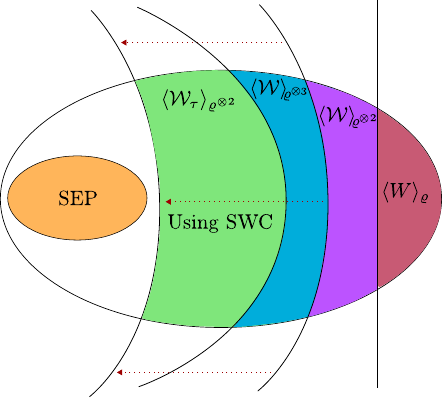}
    \caption{{\bf Tailoring multicopy witnesses to states.} Linear witnesses detect regions of entangled states separated by hyperplanes (magenta area) through $\langle W\rangle_{\varrho}=\tr(W\varrho)<0$. Multicopy witnesses $\WW$ detect regions of entangled states through $\langle \WW\rangle_{\varrho^{\otimes k}}<0$~\cite{HoroMulticopyWit2003,RemikUniversal2008}, separated by curved hypersurfaces (purple and blue areas). In particular, $\WW$ can be constructed by tensoring linear witnesses~\cite{TP_Rico24}. By replacing one of the copies by a variable state or witness $\tau$, the state-witness contraction technique (SWC) allows to systematically detect a larger set of states (green area --see Table~\ref{tab:DetectIso} for an example) and approach further the set of separable states (SEP).}
    \label{fig:NonlinOpt}
\end{figure}

Note that this method of tailoring witnesses of the form proposed in~\cite{TP_Rico24} preserves their invariant properties:
\begin{observation}
    Let $\WW_\tau$ in Eq.~\eqref{eq:OptimWitGen} be constructed from trace polynomial operators $R^{(j)}$ obtained by immanant inequalities according to~\cite{MaassenSlides}. Then, the minimization of $\tr(\WW_\tau\varrho^{\otimes k-1})$ over $\tau$ is invariant under local unitaries acting on $\varrho$.
\end{observation}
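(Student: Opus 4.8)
The plan is to exploit the collective-unitary symmetry of the immanant-based building blocks $R^{(j)}$ and to show that acting with a local unitary on $\varrho$ only relabels the optimization variable $\tau$ by a local-unitary conjugation, which is a bijection of the feasible set of Eq.~\eqref{eq:OptimWitGen}.

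First I would rewrite the objective as a single trace, as in the proof of Proposition~\ref{prop:StWitContr},
\begin{equation*}
  \tr(\WW_\tau\,\varrho^{\otimes k-1})=\tr\Big(\bigotimes_{j=1}^{n}R^{(j)}\cdot\varrho^{\otimes k-1}\ot\tau\Big),
\end{equation*}
where each $R^{(j)}$ acts on $k_j$ copies of the $j$-th party: its first $k_j-1$ factors carry the copies of $\varrho$ and its last factor carries $\tau$. The key structural input is that operators built from immanant inequalities are linear combinations of permutation operators $\eta_d(\pi)$, so by Schur--Weyl duality they are invariant under collective conjugation, $U^{\ot k_j}\,R^{(j)}\,(U^{\dagger})^{\ot k_j}=R^{(j)}$ for every unitary $U$; for instance both $\one-6P_{1^3}$ and $P_{1^3}$ in Eq.~\eqref{eq:WitP(1-P)exampleOpt} commute with $U^{\ot 3}$.

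Next I would push the local unitary through. Writing $\tilde\varrho=(U_A\ot U_B)\varrho(U_A\ot U_B)^{\dagger}$ and using cyclicity, the conjugation by $U_A^{\ot k-1}$ (and $U_B^{\ot k-1}$) acting on the $k-1$ copies can be completed to a collective conjugation on all $k$ factors of each $R^{(j)}$ at the price of inserting a compensating $U_A^{\dagger}$ (resp. $U_B^{\dagger}$) on the last factor, that is, precisely on the subsystems supporting $\tau$. By the collective invariance above the completed conjugation leaves each $R^{(j)}$ fixed, so the only surviving unitaries sit on the support of $\tau$; moving them back onto $\tau$ by cyclicity gives
\begin{equation*}
  \tr(\WW_\tau\,\tilde\varrho^{\otimes k-1})=\tr(\WW_{\tau'}\,\varrho^{\otimes k-1}),\qquad \tau'=G^{\dagger}\tau G,
\end{equation*}
with $G=U_A\ot U_B\ot\cdots$ the product of the local unitaries acting on the parties of $\tau$.

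Finally I would verify that $\tau\mapsto G^{\dagger}\tau G$ maps the feasible set of Eq.~\eqref{eq:OptimWitGen} bijectively onto itself: unitary conjugation preserves positivity and the normalization $\tr\tau=1$, and it preserves each decomposability constraint $X_{\mathcal S}^{T_{\mathcal S}}\geq0$ because partial transposition on $\mathcal S$ after a local-unitary conjugation equals conjugation by the complex-conjugated local unitary after the partial transpose, which again preserves positivity. Since minimizing over $\tau$ and over $\tau'$ therefore yield the same value, the optimum is invariant under local unitaries on $\varrho$. I expect the main obstacle to be the bookkeeping of the ``off-by-one'' step, i.e.\ checking that completing the $(k-1)$-fold conjugation to the $k$-fold collective one deposits the residual unitaries exactly on the subsystems carrying $\tau$; the stability of the feasible set under local conjugation is comparatively routine.
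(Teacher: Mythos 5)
Your proof is correct and follows essentially the same route as the paper's: both arguments combine Schur--Weyl duality (collective unitaries commute with the permutation operators composing the trace-polynomial $R^{(j)}$) with the observation that a local unitary on $\varrho$ can be absorbed into a unitary relabeling $\tau\mapsto G^{\dagger}\tau G$ of the optimization variable. Your write-up is somewhat more explicit than the paper's --- in particular the check that conjugation preserves the decomposability constraints $X,\,X_{\mathcal S}^{T_{\mathcal S}}\geq 0$ and $\tr\tau=1$, which the paper leaves implicit --- but the underlying idea is identical.
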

\begin{proof}
Local unitaries act as $U^{\otimes k-1}$ onto the first $k
-1$ copies of each local party. A minimum obtained for $\varrho_{AB}$ by  $\tau_{AB}$ is also obtained for $U\otimes V\varrho_{AB}U^\dag\otimes V^\dag$ by $U\otimes V\tau_{AB}U^\dag\otimes V^\dag$ (extension to the multipartite case is straightforward). Thus local unitaries act in practice as $U^{\otimes k}$ on each party. The rest of the argument is the same as in~\cite{TP_Rico24}, namely $U^{\otimes k}$ commutes with the representation $\eta_d(\pi)$ of permutations $\pi\in S_{k}$ due to the Schur-Weyl duality.
\end{proof}

\begin{table}[]
    \centering
    \begin{tabular}{c c c c c}
    \hline
      $p^*$:\, detect \quad & \quad $d=3$ & \quad $d=4$ & \quad $d=5$ & \quad $d=6$ \\
    \hline
       $\langle\WW\rangle_{\varrho^{\otimes^{3}}}<0$~\cite{TP_Rico24} \quad & \quad $0.727$ \quad & \quad $0.674$ & \quad $0.630$ & \quad $0.594$ \\
       $\langle\WW_{\phi^+}\rangle_{\varrho^{\otimes^{2}}}<0$~\eqref{eq:OptimWitGen}\,\,\, \quad & \quad $0.319$ \quad & \quad $0.262$ & \quad $0.222$ &\quad $0.193$ \\
       $\varrho\in\text{ENT}$ \quad & \,\, $0.25$ & \quad $0.2$ & \quad $0.1\overline{6}$ & \quad $0.\overline{142857}
       $ \\
    \hline
    \end{tabular}
    \caption{{\bf Advantage of tailored nonlinear witnesses}, over non-tailored. In each column we display the threshold parameter $p^*$ so that the sate~\eqref{eq:Isotropic} is detected for all values $p\geq p^*$, obtained numerically, with different methods for local dimension $d=3,4,5,6$. We compare these values corresponding to the non-tailored witness $\WW$ we provided in~\cite{TP_Rico24} (first row, $\tau=\varrho$); and the tailored version of this witness~\eqref{eq:WitP(1-P)exampleOpt} (second row, $\tau=\dyad{\phi^+}$). In this toy model, the state is entangled (ENT) for $p>(d+1)^{-1}$ and separable otherwise~\cite{HoroRedCrit_1999}.}
    \label{tab:DetectIso}
\end{table}
Besides being unitary invariant, the witnesses we constructed in~\cite{TP_Rico24} can be implemented with randomized measurements~\cite{Elben2020,Elben22Toolbox,Neven_SymmetryResMomentsPT2021,Huang_ManyPropsFewMeas_2020}. They key idea is that since only moments of the density matrix $\varrho$ are involved in the expectation values, then postprocessing the classical shadows can be done by matrix multiplication, thus avoiding large arrays. Note that the improvement in Eq.~\eqref{eq:OptimWitGen} preserves this implementability: one replaces each $k$'th classical shadow $\varrho^{r_k}=\bigotimes_{i=1}^n\varrho^{k}_i$ obtained from a set of random outcomes $r_k$ by the known optimal state $\tau$, and operates analogously to approximate $\tr(\WW_\tau\varrho^{\otimes k-1})$.
\\

{\em Distillability.}\,\,
Besides detecting entanglement, it is a major problem to detect whether a given bipartite state can be distilled to a two-qubit Bell pair by using multiple copies~\cite{HoroRedCrit_1999,DistBip_Dur2000}. Here we address this problem as follows: in Proposition~\ref{prop:StWitContr} consider the case $S^1=A$ and $S^2=B$, both with $k$ parties having the same dimensions $d_A$ and $d_B$, similarly as in the multi-copy scenario above. Let us set $R^{(1)}=P_A\Pi^{\otimes k}$ and $R^{(2)}=Q_B\Pi^{\otimes k}$, where $P_A$ and $Q_B$ are projectors onto subspaces of ${(\C^{d_{A}})}^{\otimes k}$ and ${(\C^{d_{B}})}^{\otimes k}$, and $\Pi$ is a local projector onto $\C^2$ (see Tab. I and Fig. 2 in~\cite{long} for specific results derived in this case). Then the $(k-1)$-copy action of the multilinear positive maps $\Psi_{A_1...A_{k-1}}$ and $\Psi_{B_1...B_{k-1}}$ defined by the multipartite Choi matrices $R^{(1)}$ and $R^{(2)}$, namely $\Psi_{A_1...A_{k-1}}\otimes\Psi_{B_1...B_{k-1}}(\varrho_{AB}^{\otimes k-1})$, projects $k-1$ copies of a bipartite state to the two-qubit subspace $\C^2\otimes\C^2$, where the PPT-criterion detects both entanglement and distillability~\cite{HorodeckiPPT_1996,HoroRedCrit_1999}. 

Thus the framework introduced in this work naturally provides a criterion for $(k-1)$-copy distillability, since two-qubit entanglement can be distilled from a state $\varrho_{AB}$ if and only if it can be transformed into a two-qubit entangled state through local projections on $(k-1)$ copies~\cite{Evidence_Vicenzo2000,HorodeckiBoundEnt_1998}. To assure that this is the case, it is sufficient to find local projections on $(k-1)$ copies whose output is a state with non-positive partial transposition. In particular, this is guaranteed if 
\begin{equation}\label{eq:Distillability}
    \min_\tau\tr\Big ((P_A\cdot\Pi_{A_1}^{\otimes k})\otimes (P_B\cdot\Pi_{B_1}^{\otimes k})\cdot \varrho_{A_1B_1}^{\otimes k-1}\otimes\tau_{A_kB_k}\Big )<0.
\end{equation}
Here the minimization is taken over two-qubit unit trace Hermitian operators $\tau$ with positive partial transpose $\tau^{T_A}\geq 0$, to optimize over all two-qubit witnesses. If strict negativity in Eq.~\eqref{eq:Distillability} is satisfied, then the projected output has non-positive partial transposition and therefore $\varrho_{AB}$ is $(k-1)$-copy distillable. Here the notation $P_A$ and $P_B$ highlights that these operators have support across the $k$ copies of $A_1$ and $B_1$.

This shall be illustrated in specific examples involving a single copy, considering the two-qudit antisymmetrizer $P_{1,1}^{(d)}=(\eta_d(\id)-\eta_d(12))/2$ and a two-dimensional local projector $\Pi=\diag(1,1,0,...,0)$ diagonal in the computational basis: in~\cite{long} we demonstrate 
that most generic states in a random sample are detected to be one-copy distillable by using $R^{(1)}=P_{1,1}\Pi^{\otimes 2}$ (see~\cite{long} for a detailed numerical analysis), effectively detecting the qubit-qudit distillable entanglement present in a subspace of a two-qudit system. Moreover, choosing $R^{(1)}=R^{(2)}=P_{1,1}\Pi^{\otimes 2}$, the operator $\WW = R^{(1)}_{A_1A_2}\otimes R^{(2)}_{B_1B_2}$ detects both entanglement and one-copy distillability on all Werner states $\varrho_W$ known to be distillable, namely
\begin{equation}
    \varrho_W = p P_{2,0}^{(d)}/d_s + (1-p)P_{1,1}^{(d)}/d_a 
\end{equation}
within $p\in(p_0,1/2]$ with $p_0=\frac{d+1}{4d-2}$~\cite{DistBip_Dur2000}, where $P_{2,0}^{(d)}=(\eta_d(\id)+\eta_d(12))/2$,  $d_s=d(d+1)/2$ and $d_a=d(d-1)/2$. Indeed, one verifies that
\begin{equation}\label{ProjectToWerner}
    \tr_{A_1B_1}\Big ( \WW \cdot {\varrho_W}_{A_1B_1}\otimes\one_{A_2B_2} \Big ) = p P_{2,0}^{(2)}/d_s + (1-p) P_{1,1}^{(2)}/d_a\,,
\end{equation}
where $P_{2,0}^{(2)}$ and $P_{1,1}^{(2)}$ are the two-qubit symmetrizer and antisymmetrizer (projectors onto the triplet and singlet subspaces) respectively. Therefore, choosing the two-qubit SWAP operator $\tau=2\dyad{\phi^+}^{T_A}=\eta_2(12)$, we 
have $\tr(\WW_\tau\varrho_W)<0$ for $p\in(p_0,1/2]$. This successful ansatz shows that the presented method to detect entanglement is also well suited to detect distillability, by using local projectors and optimization over two-qubit witnesses. This approach recovers a similar spirit as earlier criteria using witnesses~\cite{Kraus02DistEW} or SDP optimization~\cite{Rains06DistWernerSDP,Doherty06DistWernerSDP,Wang16ImprovedDistWernerSDP,Rozp18DistOpt}. 
\\

{\em Conclusions.}\,\, 
We have developed a technique to design new linear and nonlinear entanglement witnesses, and demonstrated its capabilities with examples. As a proof of concept we have shown that 
limited shared resources allow to employ witnesses unable to detect locally bound entangled states
to construct new larger witnesses able to do so; and that available examples of witnesses for few-body bound entangled states can be used to construct new witnesses detecting bound entanglement in larger systems.  
The method applies to construct nonlinear witnesses acting on multiple copies.
We are able to optimize these to target states with little computational resources, which is a current challenge. As a case study of both theoretical and experimental interest, we have seen that our previous multicopy construction provided in~\cite{TP_Rico24} can be tuned to significantly better detect noisy entangled states, while the desired properties of unitary invariance and implementability with randomized measurements have been shown to be preserved. Moreover, the technique introduced leads to practical recipes to detect distillability, which are demonstrated to be effective for relevant families of states.

\bigskip

{\em Acknowledgements:\,\,} The author is thankful to 
Pawe{\l} Horodecki, 
Felix Huber, 
Robin Krebs, 
Ferran Riera-Sàbat, 
Some Sankar Bhattacharya and 
Anna Sanpera 
for comments and support. An anonymous Referee is acknowledged for asking whether entanglement structures beyond full-separability can be detected. Support by the Foundation for Polish Science through TEAM-NET project  POIR.04.04.00-00-17C1/18-00
and by NCN QuantERA
Project No. 2021/03/Y/ST2/00193 is gratefully acknowledged. The author also acknowledges financial support from Spanish MICIN (projects: PID2022:141283NBI00;139099NBI00) with the support of FEDER funds, the Spanish Goverment with funding
from European Union NextGenerationEU (PRTR-C17.I1), the Generalitat de Catalunya,
the Ministry for Digital Transformation and of Civil Service of the Spanish Government through the QUANTUM ENIA project -Quantum Spain Project- through the Recovery, Transformation and Resilience Plan NextGeneration EU within the framework
of the Digital Spain 2026 Agenda.   

\addcontentsline{toc}{subsection}{Bibliography}
\bibliographystyle{ieeetr}
\bibliography{Bibliography}{}

\begin{thebibliography}{10}

\bibitem{advancesHDent_Erhard2020}
M.~Erhard, M.~Krenn, and A.~Zeilinger, ``Advances in high-dimensional quantum
  entanglement,'' {\em Nature Reviews Physics}, vol.~2, no.~7, pp.~365--381,
  2020.

\bibitem{EfficientLargeScaleMBdyn_Artaco2024}
C.~Artiaco, C.~Fleckenstein, D.~Aceituno~Ch\'avez, T.~K. Kvorning, and J.~H.
  Bardarson, ``Efficient large-scale many-body quantum dynamics via
  local-information time evolution,'' {\em PRX Quantum}, vol.~5, p.~020352, Jun
  2024.

\bibitem{MPEntSuperQubits_Lu2022}
M.~Lu, J.-L. Ville, J.~Cohen, A.~Petrescu, S.~Schreppler, L.~Chen, C.~J\"unger,
  C.~Pelletti, A.~Marchenkov, A.~Banerjee, W.~P. Livingston, J.~M. Kreikebaum,
  D.~I. Santiago, A.~Blais, and I.~Siddiqi, ``Multipartite entanglement in
  rabi-driven superconducting qubits,'' {\em PRX Quantum}, vol.~3, p.~040322,
  Nov 2022.

\bibitem{LargScaleQNetworksKozlowski_2019}
W.~Kozlowski and S.~Wehner, ``Towards large-scale quantum networks,'' in {\em
  Proceedings of the Sixth Annual ACM International Conference on Nanoscale
  Computing and Communication}, NANOCOM ’19, ACM, Sept. 2019.

\bibitem{NetworkGMPE_Navascues2020}
M.~Navascu\'es, E.~Wolfe, D.~Rosset, and A.~Pozas-Kerstjens, ``Genuine network
  multipartite entanglement,'' {\em Phys. Rev. Lett.}, vol.~125, p.~240505, Dec
  2020.

\bibitem{DetMPentManyBody_Frerot2022}
I.~Fr\'erot, F.~Baccari, and A.~Ac\'{\i}n, ``Unveiling quantum entanglement in
  many-body systems from partial information,'' {\em PRX Quantum}, vol.~3,
  p.~010342, Mar 2022.

\bibitem{GURVITS2003sepNPhard}
L.~Gurvits, ``Classical complexity and quantum entanglement,'' {\em Journal of
  Computer and System Sciences}, vol.~69, no.~3, pp.~448--484, 2004.
\newblock Special Issue on STOC 2003.

\bibitem{DetMBodyContinuous_Kunkel2022}
P.~Kunkel, M.~Pr\"ufer, S.~Lannig, R.~Strohmaier, M.~G\"arttner, H.~Strobel,
  and M.~K. Oberthaler, ``Detecting entanglement structure in continuous
  many-body quantum systems,'' {\em Phys. Rev. Lett.}, vol.~128, p.~020402, Jan
  2022.

\bibitem{EntDetRev_Guhne2009}
O.~Gühne and G.~Tóth, ``Entanglement detection,'' {\em Physics Reports},
  vol.~474, p.~1–75, Apr. 2009.

\bibitem{ReviewQEnt_Horo2009}
R.~Horodecki, P.~Horodecki, M.~Horodecki, and K.~Horodecki, ``Quantum
  entanglement,'' {\em Rev. Mod. Phys.}, vol.~81, pp.~865--942, Jun 2009.

\bibitem{bae2020mirrored}
J.~Bae, D.~Chru{\'s}ci{\'n}ski, and B.~C. Hiesmayr, ``Mirrored entanglement
  witnesses,'' {\em npj Quantum Information}, vol.~6, no.~1, p.~15, 2020.

\bibitem{doherty2004complete}
A.~C. Doherty, P.~A. Parrilo, and F.~M. Spedalieri, ``Complete family of
  separability criteria,'' {\em Phys. Rev. A}, vol.~69, no.~2, p.~022308, 2004.

\bibitem{Navascues_PPT_DPS_2009}
M.~Navascu\'es, M.~Owari, and M.~B. Plenio, ``Power of symmetric extensions for
  entanglement detection,'' {\em Phys. Rev. A}, vol.~80, p.~052306, Nov 2009.

\bibitem{Huber2022DimFree}
F.~Huber, I.~Klep, V.~Magron, and J.~Vol{\v{c}}i{\v{c}}, ``Dimension-free
  entanglement detection in multipartite {W}erner states,'' {\em Communications
  in Mathematical Physics}, vol.~396, pp.~1051--1070, aug 2022.

\bibitem{Dagmar_DetectHypergraph2017}
M.~Ghio, D.~Malpetti, M.~Rossi, D.~Bru{\ss}, and C.~Macchiavello,
  ``Multipartite entanglement detection for hypergraph states,'' {\em Journal
  of Physics A: Mathematical and Theoretical}, vol.~51, p.~045302, december
  2017.

\bibitem{Guhne_DetectGraph}
O.~G\"{u}hne, G.~T\'{o}th, P.~Hyllus, and H.~J. Briegel, ``Bell inequalities
  for graph states,'' {\em Physical Review Letters}, vol.~95, p.~120405, Sep
  2005.

\bibitem{Liu_FundamLimDet2022}
P.~Liu, Z.~Liu, S.~Chen, and X.~Ma, ``Fundamental limitation on the
  detectability of entanglement,'' {\em Physical Review Letters}, vol.~129,
  p.~230503, Nov 2022.

\bibitem{HoroMulticopyWit2003}
P.~Horodecki, ``From limits of quantum operations to multicopy entanglement
  witnesses and state-spectrum estimation,'' {\em Physical Review A}, vol.~68,
  p.~052101, Nov 2003.

\bibitem{RemikUniversal2008}
R.~Augusiak, M.~Demianowicz, and P.~Horodecki, ``Universal observable detecting
  all two-qubit entanglement and determinant-based separability tests,'' {\em
  Phys. Rev. A}, vol.~77, p.~030301, Mar 2008.

\bibitem{TP_Rico24}
A.~Rico and F.~Huber, ``Entanglement detection with trace polynomials,'' {\em
  Phys. Rev. Lett.}, vol.~132, p.~070202, Feb 2024.

\bibitem{Karol-GeoQstates2006}
I.~Bengtsson and K.~\.Zyczkowski, {\em Geometry of Quantum States: An
  Introduction to Quantum Entanglement}.
\newblock Cambridge University Press, 2006.

\bibitem{long}
A.~Rico, ``Mixed state entanglement from symmetric matrix inequalities,'' {\em
  arXiv:2502.18446}, 2025.

\bibitem{MaassenSlides}
H.~Maassen and B.~K\"{u}mmerer, ``Entanglement of symmetric {W}erner states,''
  {\em Workshop: Mathematics of Quantum Information Theory,
  \url{http://www.bjadres.nl/MathQuantWorkshop/Slides/SymmWernerHandout.pdf}},
  2019.

\bibitem{Elben2020}
A.~Elben, R.~Kueng, H.-Y.~R. Huang, R.~van Bijnen, C.~Kokail, M.~Dalmonte,
  P.~Calabrese, B.~Kraus, J.~Preskill, P.~Zoller, and B.~Vermersch,
  ``Mixed-state entanglement from local randomized measurements,'' {\em
  Physical Review Letters}, vol.~125, p.~200501, Nov 2020.

\bibitem{Neven_SymmetryResMomentsPT2021}
A.~Neven, J.~Carrasco, V.~Vitale, C.~Kokail, A.~Elben, M.~Dalmonte,
  P.~Calabrese, P.~Zoller, B.~Vermersch, R.~Kueng, and B.~Kraus,
  ``Symmetry-resolved entanglement detection using partial transpose moments,''
  {\em npj Quantum Information}, vol.~7, oct 2021.

\bibitem{Huber2021MatrixFO}
F.~Huber and H.~Maassen, ``Matrix forms of immanant inequalities,'' {\em
  arXiv:2103.04317}, 2021.

\bibitem{PmapsWBAlg_Miqueleta2024}
M.~Balanzó-Juandó, M.~Studziński, and F.~Huber, ``Positive maps from the
  walled brauer algebra,'' {\em Journal of Physics A: Mathematical and
  Theoretical}, vol.~57, p.~115202, mar 2024.

\bibitem{ImaiBoundfromRM_2021}
S.~Imai, N.~Wyderka, A.~Ketterer, and O.~G\"uhne, ``Bound entanglement from
  randomized measurements,'' {\em Phys. Rev. Lett.}, vol.~126, p.~150501, Apr
  2021.

\bibitem{liu2022MomentsPermutations}
Z.~Liu, Y.~Tang, H.~Dai, P.~Liu, S.~Chen, and X.~Ma, ``Detecting entanglement
  in quantum many-body systems via permutation moments,'' {\em Physical Review
  Letters}, vol.~129, no.~26, p.~260501, 2022.

\bibitem{zhang2023experimentalverificationboundmultiparticle}
C.~Zhang, Y.-Y. Zhao, N.~Wyderka, S.~Imai, A.~Ketterer, N.-N. Wang, K.~Xu,
  K.~Li, B.-H. Liu, Y.-F. Huang, C.-F. Li, G.-C. Guo, and O.~Gühne,
  ``Experimental verification of bound and multiparticle entanglement with the
  randomized measurement toolbox,'' 2023.

\bibitem{CollectiveRM_Imai2024}
S.~Imai, G.~T\'oth, and O.~G\"uhne, ``Collective randomized measurements in
  quantum information processing,'' {\em Phys. Rev. Lett.}, vol.~133,
  p.~060203, Aug 2024.

\bibitem{marshall1979inequalities}
A.~W. Marshall, I.~Olkin, and B.~C. Arnold, {\em Inequalities: {T}heory of
  {M}ajorization and its {A}pplications}.
\newblock Springer, 1979.

\bibitem{Lieb_Permanent_Conj_1996}
E.~H. LIEB, ``Proofs of some conjectures on permanents,'' {\em Journal of
  Mathematics and Mechanics}, vol.~16, no.~2, pp.~127--134, 1966.

\bibitem{4partUnlockBound_Smolin2001}
J.~A. Smolin, ``Four-party unlockable bound entangled state,'' {\em Phys. Rev.
  A}, vol.~63, p.~032306, Feb 2001.

\bibitem{SepDistMultiPartBoundUnlock_Dur1999}
W.~D\"ur, J.~I. Cirac, and R.~Tarrach, ``Separability and distillability of
  multiparticle quantum systems,'' {\em Phys. Rev. Lett.}, vol.~83,
  pp.~3562--3565, Oct 1999.

\bibitem{GenSmolinStatesUnlockBound_Augusiak2006}
R.~Augusiak and P.~Horodecki, ``Generalized smolin states and their
  properties,'' {\em Phys. Rev. A}, vol.~73, p.~012318, Jan 2006.

\bibitem{BoundMaxViolBellUnlockApplic_Augusiak2006}
R.~Augusiak and P.~Horodecki, ``Bound entanglement maximally violating bell
  inequalities: Quantum entanglement is not fully equivalent to cryptographic
  security,'' {\em Phys. Rev. A}, vol.~74, p.~010305, Jul 2006.

\bibitem{QComComplexBoundUnlockApplic_Bruckner2002}
i.~c.~v. Brukner, M.~\ifmmode~\dot{Z}\else \.{Z}\fi{}ukowski, and A.~Zeilinger,
  ``Quantum communication complexity protocol with two entangled qutrits,''
  {\em Phys. Rev. Lett.}, vol.~89, p.~197901, Oct 2002.

\bibitem{RemoteInfoConcUnlockBoundApplic_Murao2001}
M.~Murao and V.~Vedral, ``Remote information concentration using a bound
  entangled state,'' {\em Phys. Rev. Lett.}, vol.~86, pp.~352--355, Jan 2001.

\bibitem{kye2015threeIndecWit}
S.-H. Kye, ``Three-qubit entanglement witnesses with the full spanning
  properties,'' {\em Journal of Physics A: Mathematical and Theoretical},
  vol.~48, no.~23, p.~235303, 2015.

\bibitem{anjos2011handbookConicOpt}
M.~F. Anjos and J.~B. Lasserre, {\em Handbook on semidefinite, conic and
  polynomial optimization}, vol.~166.
\newblock Springer Science \& Business Media, 2011.

\bibitem{navascues2008convergent}
M.~Navascu{\'e}s, S.~Pironio, and A.~Ac{\'\i}n, ``A convergent hierarchy of
  semidefinite programs characterizing the set of quantum correlations,'' {\em
  New J. Phys.}, vol.~10, no.~7, p.~073013, 2008.

\bibitem{parrilo2003SDP}
P.~A. Parrilo, ``Semidefinite programming relaxations for semialgebraic
  problems,'' {\em Mathematical programming}, vol.~96, pp.~293--320, 2003.

\bibitem{GeoQuantStates2006}
I.~Bengtsson and K.~{\.Z}yczkowski, {\em Geometry of Quantum States: an
  Introduction to Quantum Entanglement}.
\newblock Cambridge University Press, 2006.

\bibitem{HoroRedCrit_1999}
M.~Horodecki and P.~Horodecki, ``Reduction criterion of separability and limits
  for a class of distillation protocols,'' {\em Phys. Rev. A}, vol.~59,
  pp.~4206--4216, Jun 1999.

\bibitem{Elben22Toolbox}
A.~Elben, S.~T. Flammia, H.-Y. Huang, R.~Kueng, J.~Preskill, B.~Vermersch, and
  P.~Zoller, ``The randomized measurement toolbox,'' {\em Nature Reviews
  Physics}, vol.~5, pp.~9--24, dec 2022.

\bibitem{Huang_ManyPropsFewMeas_2020}
H.-Y. Huang, R.~Kueng, and J.~Preskill, ``Predicting many properties of a
  quantum system from very few measurements,'' {\em Nature Physics}, vol.~16,
  pp.~1050--1057, jun 2020.

\bibitem{DistBip_Dur2000}
W.~D\"ur, J.~I. Cirac, M.~Lewenstein, and D.~Bru\ss{}, ``Distillability and
  partial transposition in bipartite systems,'' {\em Phys. Rev. A}, vol.~61,
  p.~062313, May 2000.

\bibitem{HorodeckiPPT_1996}
M.~Horodecki, P.~Horodecki, and R.~Horodecki, ``Separability of mixed states:
  necessary and sufficient conditions,'' {\em Phys. Lett. A}, vol.~223,
  p.~1–8, Nov. 1996.

\bibitem{Evidence_Vicenzo2000}
D.~P. DiVincenzo, P.~W. Shor, J.~A. Smolin, B.~M. Terhal, and A.~V. Thapliyal,
  ``Evidence for bound entangled states with negative partial transpose,'' {\em
  Phys. Rev. A}, vol.~61, p.~062312, May 2000.

\bibitem{HorodeckiBoundEnt_1998}
M.~Horodecki, P.~Horodecki, and R.~Horodecki, ``Mixed-state entanglement and
  distillation: Is there a “bound” entanglement in nature?,'' {\em Physical
  Review Letters}, vol.~80, p.~5239–5242, June 1998.

\bibitem{Kraus02DistEW}
B.~Kraus, M.~Lewenstein, and J.~I. Cirac, ``Characterization of distillable and
  activatable states using entanglement witnesses,'' {\em Phys. Rev. A},
  vol.~65, p.~042327, Apr 2002.

\bibitem{Rains06DistWernerSDP}
E.~M. Rains, ``A semidefinite program for distillable entanglement,'' {\em IEEE
  Trans. Inf. Theor.}, vol.~47, p.~2921–2933, Sept. 2006.

\bibitem{Doherty06DistWernerSDP}
R.~O. Vianna and A.~C. Doherty, ``Distillability of werner states using
  entanglement witnesses and robust semidefinite programs,'' {\em Phys. Rev.
  A}, vol.~74, p.~052306, Nov 2006.

\bibitem{Wang16ImprovedDistWernerSDP}
X.~Wang and R.~Duan, ``Improved semidefinite programming upper bound on
  distillable entanglement,'' {\em Phys. Rev. A}, vol.~94, p.~050301, Nov 2016.

\bibitem{Rozp18DistOpt}
F.~Rozpedek, T.~Schiet, L.~P. Thinh, D.~Elkouss, A.~C. Doherty, and S.~Wehner,
  ``Optimizing practical entanglement distillation,'' {\em Phys. Rev. A},
  vol.~97, p.~062333, Jun 2018.

\bibitem{Ent3qubits_Dur2000}
W.~D\"ur, G.~Vidal, and J.~I. Cirac, ``Three qubits can be entangled in two
  inequivalent ways,'' {\em Phys. Rev. A}, vol.~62, p.~062314, Nov 2000.

\end{thebibliography}

\appendix

\section{Examples of Young projectors}\label{App:YoungProjectors}
The three-qudit Hilbert space $\mathcal{H}=\C^d\otimes\C^d\otimes\C^d$ decomposes into the symmetric, standard, and antisymmetric irreducible subspaces. These are labeled by the partitions $(3,0,0)$, $(2,1,0)$ and $(1,1,1)$, and are defined by the following Young projectors,
\begin{align}
    P_{3}&=\frac{1}{6}\eta_d\Big (\big [(\id)+(12)+(13)+(23)+(123)+(132)\big ]\Big )\,, \\
    P_{2,1}&= \frac{1}{6} \eta_d\Big (\big [2(\id)-(123)-(132)\big ]\Big )\,, \\
    P_{1^3}&= \frac{1}{6}\eta_d\Big (\big [(\id)-(12)-(13)-(23)+(123)+(132)\big ]\Big )\,.
\end{align}
Here permutations are written in cyclic notation. For example,
\begin{equation}
    \eta_d(123)\ket{v_1}\ket{v_2}\ket{v_3} = \ket{v_3}\ket{v_1}\ket{v_2}\,.
\end{equation}

\section{New witnesses for MPPT states}\label{app:LiftIndecWit}
Here we provide a simple example of an indecomposable three-partite entanglement witness lifted to a new entanglement witnesses detecting MPPT states in larger systems. We start from the three-qubit indecomposable witness
\begin{equation}
    W_I =
    \begin{pmatrix}
       \cdot & \cdot & \cdot & \cdot & \cdot & \cdot & \cdot & 1 \\
       \cdot & \cdot & \cdot & \cdot & \cdot & \cdot & 1 & \cdot \\
       \cdot & \cdot & \cdot & \cdot & \cdot & -1 & \cdot & \cdot \\
       \cdot & \cdot & \cdot & a & 1 & \cdot & \cdot & \cdot \\
       \cdot & \cdot & \cdot & 1 & b & \cdot & \cdot & \cdot \\
       \cdot & \cdot & -1 & \cdot & \cdot & \cdot & \cdot & \cdot \\
       \cdot & 1 & \cdot & \cdot & \cdot & \cdot & \cdot & \cdot \\
       1 & \cdot & \cdot & \cdot & \cdot & \cdot & \cdot & \cdot 
    \end{pmatrix}
\end{equation}
with $.=0$ and $ab=8$ given in~\cite{kye2015threeIndecWit}, which is able to detect three-qubit states with local positive partial transpositions. Notice that for four-partite states this condition is not enough for MPPT, as it does not ensure positivity of the partial transposition in subsets with more than one party. The method introduced here allows to do so. Using Eq.~\eqref{eq:MPPTwit} with $\tau=\phi^+$ and $M=\dyad{GHZ_{2,l}}$, for $l=3$ and $l=4$ we obtain respectively a 4-qubit and 5-qubit witness whose minimial expectation value over the set of MPPT states satisfies $\tr(\WW\varrho)<0$. By setting $M$ to be a $W$-state~\cite{Ent3qubits_Dur2000} or a Haar random state, a negative minimal expectation value is also obtained over the set of MPPT states.

\end{document}